\documentclass[11pt]{article}

\usepackage[utf8]{inputenc}
\usepackage{amsmath}
\usepackage[english]{babel}
\usepackage{amssymb}
\usepackage{amsthm}
\usepackage{mathtools}
\usepackage{esdiff}
\usepackage{microtype}
\usepackage{paralist}
\usepackage{skak}
\usepackage{bbm}
\usepackage[paper=a4paper,left=25mm,right=25mm,top=25mm,bottom=25mm]{geometry}
\usepackage{stmaryrd}
\usepackage{hyperref}
\usepackage{color} 
\usepackage{fancyhdr}
\usepackage{nomencl}
\usepackage{cite}
\usepackage{longtable}
\usepackage{thmtools}
\usepackage{cuted}
\usepackage{MnSymbol}
\usepackage{tikz-cd}

\declaretheoremstyle[headfont=\normalfont\bfseries]{bfthmstyle}

\declaretheorem[sharenumber=Theorem,style=bfthmstyle]{Lemma}
\declaretheorem[sharenumber=Theorem,style=bfthmstyle]{Proposition}

\declaretheoremstyle[headfont=\normalfont\bfseries,qed={$\diamond$}]{bfdefstyle}
\declaretheorem[sharenumber=Theorem,style=bfdefstyle]{Definition}
\declaretheorem[sharenumber=Theorem,style=bfdefstyle]{Definition-Proposition}
\declaretheorem[sharenumber=Theorem,style=bfdefstyle]{Definition-Lemma}
\declaretheorem[sharenumber=Theorem,style=bfdefstyle]{Example}

\declaretheoremstyle[headfont=\normalfont\bfseries,qed={$\diamond$}]{bfremstyle}
\declaretheorem[sharenumber=Theorem,style=bfdefstyle]{Remark}

 %avoid in the future

\newcommand{\set}[1]{\left\lbrace #1 \right\rbrace}

\declaretheoremstyle[
notefont=\normalfont, notebraces={}{},
headformat=\mathbb{N}UMBER~\mathbb{N}AME~\mathbb{N}OTE
]{nopar}

\numberwithin{equation}{section}

\makenomenclature

\author{Jonas Kirchhoff, Bernhard Maschke}
\title{Remarks on the geometric structure of port-Hamiltonian systems}
\date{\today}

\begin{document}
\setlength{\parindent}{0em}
\pagestyle{fancy}
\lhead{Jonas Kirchhoff, Bernhard Maschke}
\rhead{Geometric structure}

\maketitle

\paragraph{Abstract} We study the geometric structure of port-Hamiltonian systems. Starting with the intuitive understanding that port-Hamiltonian systems are ``in between'' certain closed Hamiltonian systems, the geometric structure of port-Hamiltonian systems must be ``in between'' the geometric structures of the latter systems. These are Courant algebroids; and hence the geometric structures should be related by Courant algebroid morphisms. Using this idea, we propose a definition of an intrinsic geometric structure and show that it is unique, if it exists.\vspace{-1mm}
 
\paragraph{Keywords} Dirac structures, port-Hamiltonian systems, nonlinear systems, geometrical methods\vspace{-1mm}

\vfill
\par\noindent\rule{5cm}{0.4pt}\\
\begin{footnotesize}
Corresponding author: Jonas Kirchhoff\\[1em]
Jonas Kirchhoff\\
Institut für Mathematik, Technische Universität Ilmenau, Weimarer Stra\ss e 25, 98693 Ilmenau, Germany\\
E-mail: jonas.kirchhoff@tu-ilmenau.de\\[1em]

Bernhard Maschke\\
Univ. Lyon, Université Claude Bernard Lyon 1, CNRS, LAGEPP UMR 5007, France\\
E-mail: bernhard.maschke@univ-lyon1.fr\\[1em]

Jonas Kirchhoff thanks the Technische Universität Ilmenau and the Freistaat Thüringen for their financial support as part of the Thüringer Graduiertenförderung, and the French-Dutch-German Doctoral College Port-Hamiltonian Systems: Modeling, Numerics and Control. Bernhard Maschka acknowledges support by IMPACTS (ANR-21-CE48-0018).

\end{footnotesize}

\newpage

\section{Introduction}

Port-Hamiltonian systems have become rather popular since their introduction in~\cite{MascScha92}. They have since sparked a rich theory for open systems whose closed counterparts are covered by the classical Hamiltonian mechanics. These formulations utilise a \textit{Dirac structure} on the underlying extended state space $\mathbb{R}^n\oplus (\mathbb{R}^n)^*$ in the finite-dimensional case (to which we restrict ourself for simplcity of exposition) as the structure which describes the power transport in the system; in particular, the (network) topology of the system is contained in the Dirac structure. In the case that the underlying topology is not constant (e.g. we may have a state-dependent resistor in an electrical circuit), or the underlying state space is not linear but rather a (nontrivial) manifold, the linear Dirac structure has to be replaced by a(n almost) Dirac structure in the standard Courant algebroid on the Pontryagin bundle of this manifold. When consider open systems with ports, whose interface is given by a vector bundle over the state space, then the geometric structure is less understood.~\cite{Merk09} proposed a particular Courant algebroid structure on the bundle $\mathcal{T}M\oplus\mathcal{T}^*M\oplus E\oplus E^*$ induced by a flat connection on $E$ as the geometric structure of port-Hamiltonian systems. However, it is not quite clear why precisely this structure should be the geometric structure of port-Hamiltonian systems, and whether these structures are the only ones. In this note, we propose a systematic way to define the geometric structure motivated by the intuitive understanding of port-Hamiltonian systems as ``in between'' two closed Hamiltonian systems on $M$ and $E$, given by their (almost) Dirac structures in the standard Courant algebroids over $M$ and $E$, respectively. In particular, each Dirac structure (giving a port-Hamiltonian system) should project onto a Dirac structure over $M$ (the underlying Hamiltonian system) and may be recognised as the projection of a Dirac structure over $E$.

The note is organised as follows. First, we recall the well-known notions of Courant algebroid and morphisms of Courant algebroids. We then study classical Courant algebroid morphism, i.e. morphisms given by vector bundle morphisms and show that, under rather restrictive conditions, the pullback of Courant algebroid structures may be defined. This allows us to define the geometric structure of port-Hamiltonian systems as the pullback of the geometric structure of the closed interaction system given by the interconnection of a port-Hamiltonian system with a (virtual) control neighbourhood.

\section{Courant algebroids}

In this section, we recall the well-known theory of Courant algebroids. Very informally speaking, a Courant algebroid is a pseudo-Euclidean vector bundle with an Loday algebra structure on its sections which fulfil certain conditions. A vector bundle is defined as follows.

\begin{Definition}[{see~\cite{Conlon93}}]
A vector bundle $\pi:E\to M$ of rank $m$ is a triple $(E,M,\pi)$, where $E$ and $M$ are smooth manifolds and $\pi\in\mathcal{C}^\infty(E,M)$ so that $E_x := \pi^{-1}(\set{x})$ has a vector space structure for all $x\in M$, together with a family of commutative diagrams
\begin{center}
\begin{tikzcd}
\pi^{-1}(U_i)\arrow[r]\arrow[dr,"\pi"] & U_i\times\mathbb{R}^m\arrow[d,"p_1"]\\
& U_i
\end{tikzcd}
\end{center}
whose arrows are smooth functions restricting to linear functions on fibres $E_x$ with $p_1$ denoting the Cartesian projection onto the first factor, and the $U_i$ are an open covering of $M$. A function $\varphi\in\mathcal{C}^\infty(M,E)$ with $\pi\circ \varphi = \mathrm{id}$ is a (global) \textit{section} of $E$; the set of all (global) sections is denoted with $\Gamma(E)$. If $N$ is a submanifold of $M$ and $S$ is a subbundle of $E\vert_N$ (which is called a subbundle of $E$ supported on $N$), then $\Gamma(E;S):=\set{\varphi\in\Gamma(E)\,\big\vert\,\varphi\vert_N\in\Gamma(S)}$. A \textit{morphism} of vector bundles $\pi:E\to M$ and $\pi':E'\to M'$ is a commutative diagram
\begin{center}
\begin{tikzcd}
E\arrow[r,"\varphi"]\arrow[d,"\pi"] & E'\arrow[d,"\pi'"]\\
M\arrow[r,"\varphi_0"] & M'
\end{tikzcd}
\end{center}
so that $\varphi_x:E_x\to E'_{\varphi_0(x)}$ is linear for all $x\in M$; one calls $\varphi:E\to E'$ a vector bundle morphism over the base $\varphi_0$.\-
\end{Definition}

\begin{Definition}\label{def:Courant_algebroid}[{see~\cite{JoLe18}}]
A \textit{Courant algebroid structure} on the vector bundle $\pi:E\to M$ is triple $(\rho,\langle\cdot,\cdot\rangle,[\![\cdot,\cdot]\!])$ of smooth functions
\begin{align*}
\langle\cdot,\cdot\rangle: &\ E\times_M E \to \mathbb{R},\\[0.2cm]
[\![\cdot,\cdot]\!]: &\ \Gamma(E)\times\Gamma(E) \to \Gamma(E),
\end{align*}
so that $\langle\cdot,\cdot\rangle_x:E_x\times E_x\to \mathbb{R}$ is a pseudo-Euclidean metric for all $x\in M$ and $[\![\cdot,\cdot]\!]$ is bilinear, and a vector bundle morphism $\rho:E\to\mathcal{T}M$ over the identity (the \textit{anchor}) so that, for all $f,g,h\in\Gamma(E)$, $x\in M$ and $e\in E_x$,
\begin{enumerate}
\item[(i)] $[\![f,[\![g,h]\!]]\!] = [\![[\![f,g]\!],h]\!] + [\![g[\![f,h]\!]]\!]$,\vspace{0.2cm}
\item[(ii)] $\rho(f(x))\langle g(\cdot),h(\cdot)\rangle = \langle[\![f,g]\!](x),h(x)\rangle+\langle g(x),[\![f,h]\!](x)\rangle$,\vspace{0.2cm}
\item[(iii)] $\langle e,[\![f,g]\!](x)+[\![g,f]\!](x)\rangle = \rho(e)\left(\langle f(\cdot),g(\cdot)\rangle\right)$.
\end{enumerate}
\end{Definition}

A Courant algebroid is a vector bundle with a fixed Courant algebroid structure. If the specific Courant algebroid structure is not of interest, then we say call the vector bundle Courant algebroid. We give some examples.

\begin{Example}\label{ex:ex_1}
\begin{enumerate}
\item[(i)] Let $M$ be a smooth manifold. The \textit{Pontryagin bundle} of $M$ is the the Whitney sum $\mathcal{T}M\oplus\mathcal{T}^*M$. The \textit{standard Courant-algebroid structure} on $\mathcal{T}M\oplus\mathcal{T}^*M$ consists of the anchor
\begin{align*}
\rho: \mathcal{T}M\oplus\mathcal{T}^*M,\qquad (f,e)\mapsto f,
\end{align*}
the metric
\begin{small}
\begin{align*}
\langle\cdot,\cdot\rangle_+: (\mathcal{T}M\oplus\mathcal{T}^*M)\times_M(\mathcal{T}M\oplus\mathcal{T}^*M) & \to \mathcal{T}M\oplus\mathcal{T}^*M,\\
\big((f,e),(f',e')\big) & \mapsto e(f')+e'(f)
\end{align*}
\end{small}
\noindent and the \textit{Courant-Dorfman bracket}
\begin{small}
\begin{align*}
&[\![\cdot,\cdot]\!]:\Gamma(\mathcal{T}M\oplus\mathcal{T}^*M)\times\Gamma(\mathcal{T}M\oplus\mathcal{T}^*M) \to\Gamma(\mathcal{T}M\oplus\mathcal{T}^*M),\\
&\hspace{1.85cm}\big((X,\alpha),(X',\alpha')\big) \mapsto ([X,X'],\mathfrak{L}_X\alpha'-\iota_{X'}\mathrm{d}\alpha),
\end{align*}
\end{small}
\noindent where $[\cdot,\cdot]$ is the usual Lie bracket of vector fields and
\begin{align*}
\mathfrak{L}_X\alpha' := \iota_{X}\mathrm{d}\alpha'+\mathrm{d}(\iota_X\alpha')
\end{align*}
is the Lie derivative of the differential one form $\alpha'$ along the vector field $X$. This Courant algebroid was studied (albeit with a different, skew-symmetric bracket) by~\cite{Cour90}.
\item[(ii)] Let $(\rho,\langle\cdot,\cdot\rangle,[\![\cdot,\cdot]\!])$ be a Courant algebroid structure on $\pi:E\to M$, and let $\lambda\in\mathbb{R}\setminus\set{0}$. Then $\lambda\langle\cdot,\cdot\rangle$ is also a pseudo-Euclidean metric on $E$ and the conditions (ii) and (iii) are fulfilled due to $\mathbb{R}$-linearity, so that $(\rho,\lambda\langle\cdot,\cdot\rangle,[\![\cdot,\cdot]\!])$ is a Courant algebroid structure on $E$.
\end{enumerate}
\end{Example}

\begin{Remark}
It can be shown that
\begin{align*}
[\![f,\lambda g]\!] = \lambda[\![f,g]\!] + \rho(f)(\lambda)g
\end{align*}
holds for all $f,g\in\Gamma(E)$ and $\lambda\in\mathcal{C}^\infty(M)$, see~\cite{Uchi02}. Therefore, it is straightforward to show that
\begin{equation}\label{eq:Dorfman-Leibniz-rule}
\begin{aligned}
[\![\lambda f,\mu g]\!] & = \lambda\mu[\![f,g]\!] + \lambda\rho(f)(\mu)g -\mu\rho(g)(\lambda)g\\
&\quad +\langle f,g\rangle\mu D_\rho(\lambda)
\end{aligned}
\end{equation}
holds for all $f,g\in\Gamma(E)$ and $\lambda,\mu\in\mathcal{C}^\infty(M)$.
\end{Remark}

Recall that there exists a unique Courant algebroid structure on the Cartesian product of two Courant algebroids $\pi_1:E_1\to M_1$ and $\pi_2:E_2\to M_2$ (with the product bundle structure), which coincides on $E_1\times 0_{E_2}(M_2)$ and $0_{E_1}(M_1)\times E_2$ with the Courant algebroid structures. A Courant algebroid morphism is then defined as follows.

\begin{Definition}[{\cite[Definition 3.1]{Vyso20}}]\label{def:CA_morphism}\ \\
Let $(\rho_1,\langle\cdot,\cdot\rangle_1,[\![\cdot,\cdot]\!]_1)$ and $(\rho_2,\langle\cdot,\cdot\rangle_2,[\![\cdot,\cdot]\!]_2)$ be Courant algebroid structures on $\pi_1:E_1\to M_1$ and $\pi_2:E_2\to M_2$. Let $\overline{E}_2$ denote $\pi_2:E_2\to M_2$ with the Courant algebroid structure $(\rho_2,-\langle\cdot,\cdot\rangle_2,[\![\cdot,\cdot]\!]_2)$. A Courant algebroid morphism is an involutive structure $R\subseteq E_1\times \overline{E}_2$ supported on $S = \mathrm{graph}\,\varphi$ for a smooth function $\varphi:M_1\to M_2$, i.e.
\begin{enumerate}
\item[(i)] $R$ is a subbundle of the restriction of the vector bundle $E_1\times E_2$ to $(\pi_1,\pi_2)^{-1}(\mathrm{graph}\,\varphi) = (E_1\times E_2)\vert_{\mathrm{graph}\,\varphi}$.
\item[(ii)] $\forall (e_1,e_2),(e_1',e_2')\in R: \langle e_1,e_1'\rangle_1 = \langle e_2,e_2'\rangle_2$, i.e. $R$ is isotropic
\item[(iii)] $\forall f,g\in\Gamma(E_1\times E_2;R):[\![f,g]\!]\in\Gamma(E_1\times E_2;R)$, i.e. $R$ is involutive
\item[(iv)] $\forall (e_1,e_2)\in R^{\bot\!\!\!\bot}: \rho(e_1,e_2)\in\mathcal{T}S$, i.e. $R^{\bot\!\!\!\bot}$, the annihilator of $R$ with respect to the product metric, is compatible with the anchor.
\end{enumerate}
If $R$ is the graph of a function $\psi:E_1\to E_2$, then $\psi$ is called \textit{classical Courant algebroid morphism}.
\end{Definition}

\section{Classical Courant algebroid morphism}

In this section, we study classical Courant algebroid morphisms in greater detail. Evidently, each classical Courant algebroid morphism is necessarily a vector bundle morphism. The simplest case is a vector bundle morphism over the identity. In that case, we derive the following characterisation.

\begin{Lemma}\label{lem:CA_morphism_over_identity}
Let $\pi_1:E_1\to M$ and $\pi_2:E_2\to M$ be vector bundles over the same manifold $M$, and let $(\rho_1,\langle\cdot,\cdot\rangle_1,[\![\cdot,\cdot]\!]_1)$ and $(\rho_2,\langle\cdot,\cdot\rangle_2,[\![\cdot,\cdot]\!]_2)$ be Courant algebroid structures on $E_1$ and $E_2$, respectively. Let further $\varphi:E_1\to E_2$ be a vector bundle morphism over the identity on $M$. Then $\varphi$ is a classical Courant algebroid morphism if, and only if,
\begin{align}\label{eq:function_CA_morphism}
\forall f,g\in\Gamma(E_1): \varphi\circ [\![f,g]\!]_1 = [\![\varphi\circ f,\varphi\circ g]\!]_2,
\end{align}
\begin{align}\label{eq:function_CA_morphism_2}
\forall e,e'\in E_1: \langle e,e'\rangle_1 = \langle\varphi(e),\varphi(e')\rangle_2,
\end{align}
and
\begin{align}\label{eq:function_CA_morphism_3}
\forall e\in E_1: \rho_1(e) = \rho_2(\varphi(e)).
\end{align}
\end{Lemma}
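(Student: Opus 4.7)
The plan is to reformulate Definition~\ref{def:CA_morphism} for the specific relation $R = \mathrm{graph}\,\varphi$, which is a subbundle of $E_1\times E_2$ supported on the diagonal $\Delta := \mathrm{graph}(\mathrm{id}_M) \subseteq M\times M$, and to show that conditions (i)--(iv) translate into the three displayed identities \eqref{eq:function_CA_morphism}--\eqref{eq:function_CA_morphism_3}.

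I would first dispatch the three easy pieces, namely (i), (ii), and one direction of (iv). Condition (i) is automatic since $e \mapsto (e,\varphi(e))$ is a smooth vector bundle morphism, so its image is a subbundle over $\Delta$. For (ii), the product metric on $E_1\times\overline{E}_2$ evaluated on $(e,\varphi(e)),(e',\varphi(e'))\in R$ equals $\langle e,e'\rangle_1-\langle\varphi(e),\varphi(e')\rangle_2$, which vanishes identically iff \eqref{eq:function_CA_morphism_2} holds. Since $R$ is then isotropic, $R\subseteq R^{\bot\!\!\!\bot}$, and testing the anchor condition in (iv) on elements $(e,\varphi(e))\in R$ together with $T\Delta|_{(x,x)} = \{(v,v):v\in T_xM\}$ yields precisely \eqref{eq:function_CA_morphism_3}.

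The main content is condition (iii). Using the factorised product Courant algebroid structure, sections of $E_1\times E_2$ are locally generated over $\mathcal{C}^\infty(M\times M)$ by pullbacks $p_1^*f$, $p_2^*g$ of sections of $E_1$ and $E_2$, with bracket satisfying $[\![ p_i^*f, p_i^*g]\!] = p_i^*[\![ f,g]\!]_i$ and $[\![ p_1^*f, p_2^*g]\!] = 0$. Each section $p_1^*f + p_2^*(\varphi\circ f)$ lies in $\Gamma(E_1\times E_2; R)$, and the bracket of two such generators, restricted to $\Delta$, evaluates to $\bigl([\![ f,g]\!]_1(x),\,[\![\varphi\circ f,\varphi\circ g]\!]_2(x)\bigr)$, which lies in $R$ iff \eqref{eq:function_CA_morphism} holds. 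Sections in $\Gamma(E_1\times E_2;R)$ vanishing on $\Delta$ are then handled by the Leibniz rule~\eqref{eq:Dorfman-Leibniz-rule}, where \eqref{eq:function_CA_morphism_3} is exactly what ensures that the ideal of functions vanishing on $\Delta$ remains invariant under the bracket.

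To close the converse part of (iv), I would extend the anchor condition from $R$ to all of $R^{\bot\!\!\!\bot}$: any element of $R^{\bot\!\!\!\bot}$ differs from one in $R$ by a pair $(0,b)$ with $b$ in the $\langle\cdot,\cdot\rangle_2$-orthogonal complement of $\mathrm{im}\,\varphi_x$, and the standard Courant-algebroid identity $\rho_2\circ\rho_2^* = 0$ then forces $\rho_2(b)=0$. The principal obstacle I anticipate is the sheaf-theoretic bookkeeping in step (iii): reducing an arbitrary section of $\Gamma(E_1\times E_2;R)$ to a sum of pullback generators plus a section vanishing on $\Delta$, and tracking the Leibniz correction terms across the diagonal using the anchor compatibility \eqref{eq:function_CA_morphism_3}.
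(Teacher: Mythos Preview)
Your outline matches the paper's proof almost verbatim for conditions (i)--(iii) and for the forward direction of (iv): the paper, too, reduces involutivity to the pullback generators $p_1^*f\oplus p_2^*(\varphi\circ f)$, disposes of arbitrary sections by the Leibniz rule (it quotes \cite[Proposition~2.12]{Vyso20} where you invoke~\eqref{eq:Dorfman-Leibniz-rule} directly), and reads off~\eqref{eq:function_CA_morphism_3} from anchor compatibility on $R\subseteq R^{\bot\!\!\!\bot}$.

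The one genuine gap is your argument for the \emph{converse} of (iv). You correctly compute that every element of $R^{\bot\!\!\!\bot}$ is of the form $(e,\varphi(e))+(0,b)$ with $b\in(\mathrm{im}\,\varphi_x)^{\perp_2}$, and that what remains is to show $\rho_2(b)=0$. But the identity $\rho_2\circ\rho_2^*=0$ only says that $\rho_2$ vanishes on $\mathrm{im}\,\rho_2^*$; it gives no information about $(\mathrm{im}\,\varphi)^{\perp_2}$ unless you first establish $\mathrm{im}\,\rho_2^*\subseteq\mathrm{im}\,\varphi$. Conditions~\eqref{eq:function_CA_morphism_2} and~\eqref{eq:function_CA_morphism_3} alone do not force this (take $E_1$ of rank zero: then~\eqref{eq:function_CA_morphism_2}--\eqref{eq:function_CA_morphism_3} are vacuous while $\rho_2$ is unconstrained). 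What actually pins down $\rho_2$ on $(\mathrm{im}\,\varphi)^{\perp_2}$ is the bracket condition~\eqref{eq:function_CA_morphism}, through the involutivity it produces: from $[\![f,f]\!]_i=\tfrac12 D_{\rho_i}\langle f,f\rangle_i$ and~\eqref{eq:function_CA_morphism} one extracts $\varphi\circ D_{\rho_1}\mu = D_{\rho_2}\mu$ for enough functions $\mu$ to conclude $\mathrm{im}\,\rho_2^*\subseteq\mathrm{im}\,\varphi$, hence $(\mathrm{im}\,\varphi)^{\perp_2}\subseteq\ker\rho_2$. The paper does not spell this out either; it instead invokes \cite[Proposition~2.16]{Vyso20}, a general statement that involutivity of a proper isotropic $R$ already forces $\rho(R^{\bot\!\!\!\bot})\subseteq TS$. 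Either route works, but yours as written needs the extra step linking~\eqref{eq:function_CA_morphism} to the co-anchor.
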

\begin{proof}
The isotropy condition (ii) of Definition~\ref{def:CA_morphism} is evidently equivalent to~\eqref{eq:function_CA_morphism_2}. It remains to show the equivalence of the conditions~\eqref{eq:function_CA_morphism} and~\eqref{eq:function_CA_morphism_3} and the conditions (iii) and (iv) of Definition~\ref{def:CA_morphism}.

``$\implies$'' Let $\varphi$ be a classical Courant algebroid morphism.  We show~\eqref{eq:function_CA_morphism}. Let $f,g\in\Gamma(E_1)$. Since $\varphi$ is a vector bundle morphism over the identity, $\varphi\circ f,\varphi\circ g\in\Gamma(E_2)$. Put $\widehat{f} := (p_1^*f,p_2^*(\varphi\circ f))$ and $\widehat{g} := (p_1^*g,p_2^*(\varphi\circ g))$. By construction, we have $\widehat{f},\widehat{g}\in\Gamma(E_1\times E_2;\mathrm{graph}\,\varphi)$. Hence, involutivity of $\mathrm{graph}\,\varphi$ yields in view of the definition of the Courant-Dorfman bracket for $E_1\times \overline{E_2}$ that
\begin{align*}
[\![\widehat{f},\widehat{g}]\!](x,x) = ([\![f,g]\!]_1(x),[\![\varphi\circ f,\varphi\circ g]\!]_2(x))\in\mathrm{graph}\,\varphi,
\end{align*}
and hence $\varphi\circ [\![f,g]\!]_1(x) = [\![\varphi\circ f,\varphi\circ g]\!]_2(x)$ for all $x\in M$.
This shows that $\varphi$ fulfils~\eqref{eq:function_CA_morphism}. It remains to show~\eqref{eq:function_CA_morphism_3}. Unless the rank of $E_2$ is zero, we have $\mathrm{graph}\,\varphi\neq E_1\times E_2\vert_{\delta_M(M)}$, where $\delta_M:M\hookrightarrow M\times M$ denotes the diagonal embedding so that $\delta_M(M) = \mathrm{graph}\,\mathrm{id}_M$, and hence~\cite[Proposition 2.16]{Vyso20} yields that $\mathrm{graph}\,\varphi$ is compatible with the anchor $\rho$. In the former case, however, injectivity of $\varphi$ (which holds by~\eqref{eq:function_CA_morphism}) implies the rank of $E_1$ to be zero and hence $\rho = 0$, which is compatible which every subvector bundle supported on any submanifold. Hence~\cite[Lemma 4.2]{Vyso20} implies~\eqref{eq:function_CA_morphism_3}.

``$\impliedby$'' We show involutivity of $\Gamma(E_1\times E_2;\mathrm{graph}\,\varphi)$. Let $f,f'\in\Gamma(E_1\times E_2;\mathrm{graph}\,\varphi)$. Recall that $p_1^*\Gamma(E_1)\oplus p_2^*\Gamma(E_2)$ is a generating set for $\Gamma(E_1\times E_2)$ over $\mathcal{C}^\infty(M\times M)$, i.e. there are $\lambda_i,\lambda_j'\in\mathcal{C}^\infty(M\times M)$, $f_i,f_j'\in\Gamma(E_1)$ and $g_i,g_j'\in\Gamma(E_2)$, $i,j\in\set{1,\ldots,n}$, so that
\begin{align*}
f = \sum_{i = 1}^n\lambda_i \left(p_1^*f_i\oplus p_2^* g_i\right),\quad\text{and}\quad f' = \sum_{j = 1}^n\lambda_j' \left(p_1^*f_j'\oplus p_2^* g_j'\right).
\end{align*}
Consider the sections
\begin{align*}
\widehat{f} & = \sum_{i = 1}^n\lambda_i \left(p_1^*f_i\oplus p_2^* \varphi\circ f_i\right),\\
\widehat{f}' & = \sum_{j = 1}^n\lambda_j' \left(p_1^*f_j'\oplus p_2^* \varphi\circ f_j'\right),
\end{align*}
which are certainly contained in $\Gamma(E_1\times E_2;\mathrm{graph}\,\varphi)$ and fulfil, by linearity of $\varphi$,
\begin{align*}
\left(\widehat{f}-f\right)\vert_{\delta_M(M)} = \left(\widehat{f}'-f'\right)\vert_{\delta_M(M)} = 0.
\end{align*}
In view of~\cite[Proposition 2.12]{Vyso20}, we conclude
\begin{align*}
[\![\widehat{f},f']\!]\vert
_{\delta_M(M)} - [\![f,f']\!]\vert
_{\delta_M(M)} = [\![\widehat{f}-f,f']\!]\vert
_{\delta_M(M)} = 0
\end{align*}
and therefore (by exchanging $f$ and $f'$, and $\widehat{f}$ and $\widehat{f}'$, respectively)
\begin{align*}
[\![f,f']\!]\vert
_{\delta_M(M)} = [\![\widehat{f},\widehat{f}']\!]\vert
_{\delta_M(M)}.
\end{align*}
Then,~\eqref{eq:Dorfman-Leibniz-rule} implies on $\delta_M(M)$
\begin{align*}
[\![\widehat{f},\widehat{f}']\!] & = \sum_{i,j = 1}^n\Big( \lambda_i\lambda_j')\left([\![f_i,f_j']\!]_1,[\![\varphi\circ f_i,\varphi\circ f_j']\!]_2\right)\\
& \qquad\quad + \lambda_i\rho(p_1^*f_i\oplus p_2^* \varphi\circ f_i)(\lambda_j')(f_j',\varphi\circ f_j')\\
& \qquad\quad -\lambda_j'\rho(p_1^*f_j'\oplus p_2^* \varphi\circ f_j')(\lambda_i)(f_i,\varphi\circ f_i)\Big)\\
& = \sum_{i,j = 1}^n\Big( \lambda_i\lambda_j'\left([\![f_i,f_j']\!]_1,\varphi\circ [\![f_i,f_j']\!]_1\right)\\
& \qquad\quad + \lambda_i\rho(p_1^*f_i\oplus p_2^* \varphi\circ f_i)(\lambda_j')(f_j',\varphi\circ f_j')\\
& \qquad\quad -\lambda_j'\rho(p_1^*f_j'\oplus p_2^* \varphi\circ f_j')(\lambda_i)(f_i,\varphi\circ f_i)\Big)
\end{align*}
and hence $[\![\widehat{f},\widehat{f}']\!]\in\Gamma(E_1\times E_2;\mathrm{graph}\,\varphi)$. This shows $[\![f,f']\!]\in\Gamma(E_1\times E_2;\mathrm{graph}\,\varphi)$ and therefore $\mathrm{graph}\,\varphi$ is indeed involutive. Unless $\mathrm{rk}\, E_2 = 0$, this implies in view of~\cite[Proposition 2.16]{Vyso20} that $(\mathrm{graph}\,\varphi)^{\bot\!\!\!\bot}$ is compatible with the anchor. If, however, $\mathrm{rk}\, E_2 = 0$, then $\mathrm{rk}\, E_1 = 0$ and hence $(\mathrm{graph}\,\varphi)^{\bot\!\!\!\bot} = \mathrm{graph}\,\varphi$ is compatible with the then trivial anchor.
\end{proof}

Lemma~\ref{lem:CA_morphism_over_identity} makes heavy use of the fact that the image of a section under a vector bundle morphism over the identity is again a section. In general, this is no longer true, unless the base of the vector bundle morphism is a diffeomorphism. Therefore, the notion of $\varphi$-related sections has been introduced, see~\cite{HiggMack90}.

\begin{Definition}
Let $\pi:E\to M$ and $\tau:F\to N$ be vector bundles and $(\varphi,\varphi_0):(E,M)\to (F,N)$ a vector bundle morphism. Sections $f\in\Gamma(E)$ and $g\in\Gamma(F)$ are \textit{$\varphi$-related} if, and only if, $\varphi\circ f = g\circ\varphi_0$; we write $f\sim_{\varphi} g$.
\end{Definition}

Of course, given a section $f\in\Gamma(E)$, the existence of a $\varphi$-related section of $F$ is not guaranteed as the following examples illustrate.

\begin{Example}
\begin{enumerate}
\item[(i)] If $\varphi_0$ is not injective, then a necessary condition for the existence of a $\varphi$-related section to $f\in\Gamma(E)$ is that $f(\varphi_0^{-1}(\set{x}))$ contains at most one element for each $x\in N$, where we put $f(\emptyset) = \emptyset$.
\item[(ii)] Consider the trivial line bundles 
\begin{align*}
\pi:\mathbb{R}^{1+1} & \to\mathbb{R},\qquad (x,z)\mapsto x,\\
\tau: \mathbb{R}^{2+1} & \to\mathbb{R}^2,\qquad (x,y,z)\mapsto (x,y)
\end{align*}
with vector bundle morphism 
\begin{align*}
\varphi:\mathbb{R}^{1+1}\to\mathbb{R}^{2+1}, (x,z)\mapsto (x^2,x^3,z).
\end{align*}
The base morphism $\varphi_0 = x\mapsto (x^2,x^3)$ is certainly differentiable, but not an immersion. Consider the smooth section
\begin{align*}
f:\mathbb{R}\to\mathbb{R}^{1+1},\qquad x\mapsto (x,x)
\end{align*}
of $\mathbb{R}^{1+1}$. Let $g:\mathbb{R}^{2}\to\mathbb{R}^{2+1}$ be a function so that $\varphi\circ f = g\circ\varphi_0$, i.e.
\begin{align*}
\forall x\in\mathbb{R}, g(x^2,x^3) = x.
\end{align*}
Seeking a contradiction, assume that $g$ is differentiable. Then, however, the function
\begin{align*}
h: \mathbb{R}\to\mathbb{R},\qquad y\mapsto g(y,y^{\frac{3}{2}})
\end{align*}
is differentiable and an inverse function of $x\mapsto x^2$; therefore, $h = \sqrt{\cdot}$, which is not differentiable in zero. In particular, this shows that there is no section of $\mathbb{R}^{2+1}$ that is $\varphi$-related to $f$.
\item[(iii)] Consider the trivial line bundles
\begin{align*}
\pi:\mathbb{R}^{1+1} & \to\mathbb{R},\qquad (x,y)\mapsto x,\\
\pi:(-\infty,0)\times\mathbb{R} & \to (-\infty,0),\qquad (x,y)\mapsto x
\end{align*}
and the vector bundle morphism
\begin{align*}
\varphi: (-\infty,0)\times\mathbb{R}\to\mathbb{R}^{1+1},\qquad (x,y)\mapsto (x,y).
\end{align*}
The base morphism $\varphi_0 = x\mapsto x$ is clearly an immersion. Consider the smooth section 
\begin{align*}
f:(-\infty,0)\to (-\infty,0)\times\mathbb{R},\qquad x\mapsto \left(x,\frac{1}{x}\right)
\end{align*}
of $(-\infty,0)\times \mathbb{R}$. Since there exists no smooth function $g:\mathbb{R}\to\mathbb{R}$ so that $g(x) = \frac{1}{x}$ for all $x\in (-\infty,0)$, there exists no smooth section $g\in\Gamma(\mathbb{R}^{1+1})$ so that $f$ and $g$ are $\varphi$-related.
\end{enumerate}
\end{Example}

When we assume the existence of $\varphi$-related sections (which is, as we have seen, not the case in general), then a classical Courant algebroid morphism may be characterised as follows.

\begin{Proposition}\label{prop:char_with_phi_rel_sec}
Let $(\rho_1,\langle\cdot,\cdot\rangle_1,[\![\cdot,\cdot]\!]_1)$ and $(\rho_2,\langle\cdot,\cdot\rangle_2,[\![\cdot,\cdot]\!]_2)$ be Courant algebroid structures on $\pi_1:E_1\to M_1$ and $\pi_2:E_2\to M_2$, resp., and let $\varphi:E_1\to E_2$ be a vector bundle morphism over $\varphi_0$ so that, for each $f\in\Gamma(E_1)$ there exists a $\varphi$-related section $g\in\Gamma(E_2)$. Then, $\varphi$ is a classical Courant algebroid morphism if, and only if, for each $f_1,f_2\in\Gamma(E_1)$ and any (and hence each) $g_1,g_2\in\Gamma(E_2)$ with $f_1\sim_\varphi g_1$ and $f_2\sim_\varphi g_2$,
\begin{equation}\label{eq:CA_morphism_function_bracket}
\varphi\circ[\![f_1,f_2]\!]_1 = [\![g_1,g_2]\!]_2\circ\varphi_0
\end{equation}
and
\begin{align}\label{eq:CA_morphism_function_metric}
\langle f_1,f_2\rangle_1 = \langle g_{1},g_{2}\rangle_2\circ\varphi_0
\end{align}
and
\begin{align}\label{eq:CA_morphism_function_anchor}
\rho_2\circ \varphi = \mathrm{d}\varphi_0\circ\rho_1.
\end{align}
\end{Proposition}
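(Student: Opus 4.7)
The plan is to follow the template of Lemma~\ref{lem:CA_morphism_over_identity}, with $\mathrm{graph}\,\varphi_0$ playing the role of the diagonal $\delta_M(M)$ throughout; the hypothesis that every section of $E_1$ admits a $\varphi$-related section is exactly what is needed to make the generating-set argument survive in this more general setting. I would verify that each of conditions (ii)--(iv) of Definition~\ref{def:CA_morphism}, applied to $R = \mathrm{graph}\,\varphi\subseteq E_1\times\overline{E_2}\vert_{\mathrm{graph}\,\varphi_0}$, is equivalent to one of \eqref{eq:CA_morphism_function_bracket}--\eqref{eq:CA_morphism_function_anchor}.

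The easy parts are the isotropy and the anchor. A point of $\mathrm{graph}\,\varphi$ over $(x,\varphi_0(x))$ has the form $(e,\varphi(e))$ with $e\in (E_1)_x$, so condition (ii) reduces pointwise to $\langle e,e'\rangle_1 = \langle\varphi(e),\varphi(e')\rangle_2$; specialising to $e = f_1(x)$, $e' = f_2(x)$ and using $\varphi\circ f_i = g_i\circ\varphi_0$ gives exactly~\eqref{eq:CA_morphism_function_metric}, and the converse follows by extending points of $(E_1)_x$ to local sections via a frame. For the anchor, I would use the same dichotomy as in Lemma~\ref{lem:CA_morphism_over_identity}: when $\mathrm{rk}\, E_2 > 0$,~\cite[Proposition 2.16]{Vyso20} combined with involutivity and~\cite[Lemma 4.2]{Vyso20} produces~\eqref{eq:CA_morphism_function_anchor}, while the case $\mathrm{rk}\, E_2 = 0$ is essentially trivial; conversely, equation~\eqref{eq:CA_morphism_function_anchor} says exactly that $(\rho_1(e_1),\rho_2(\varphi(e_1)))$ is tangent to $\mathrm{graph}\,\varphi_0$, which is the pointwise content of condition (iv) on $\mathrm{graph}\,\varphi\subseteq(\mathrm{graph}\,\varphi)^{\bot\!\!\!\bot}$.

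The heart of the argument is the bracket identity. For the forward direction, given $f_i\in\Gamma(E_1)$ with $\varphi$-related $g_i\in\Gamma(E_2)$, I would form $\widehat{f}_i = (p_1^*f_i,p_2^*g_i)\in\Gamma(E_1\times E_2)$; $\varphi$-relatedness says that $\widehat{f}_i(x,\varphi_0(x)) = (f_i(x),\varphi(f_i(x)))\in\mathrm{graph}\,\varphi$, so $\widehat{f}_i\in\Gamma(E_1\times E_2;\mathrm{graph}\,\varphi)$. Involutivity and the definition of the product Courant--Dorfman bracket then yield
\begin{align*}
[\![\widehat{f}_1,\widehat{f}_2]\!](x,\varphi_0(x)) = \bigl([\![f_1,f_2]\!]_1(x),\,[\![g_1,g_2]\!]_2(\varphi_0(x))\bigr)\in\mathrm{graph}\,\varphi,
\end{align*}
which is exactly~\eqref{eq:CA_morphism_function_bracket}. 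Independence of the chosen $\varphi$-related lift is immediate because the right-hand side depends on the $g_i$ only through $g_i\circ\varphi_0 = \varphi\circ f_i$.

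For the converse, I would mimic the generating-set calculation from Lemma~\ref{lem:CA_morphism_over_identity}. An arbitrary $f\in\Gamma(E_1\times E_2;\mathrm{graph}\,\varphi)$ decomposes as $f = \sum_i\lambda_i(p_1^*f_i\oplus p_2^*g_i)$; by the hypothesis I may choose each $g_i$ to be $\varphi$-related to $f_i$, so that the lift $\widehat{f}$ built from these generators agrees with $f$ on $\mathrm{graph}\,\varphi_0$ and takes values in $\mathrm{graph}\,\varphi$ there. Using~\cite[Proposition 2.12]{Vyso20} to pass between $f$ and $\widehat{f}$, the Leibniz rule~\eqref{eq:Dorfman-Leibniz-rule} to expand $[\![\widehat{f},\widehat{f}']\!]$, and~\eqref{eq:CA_morphism_function_bracket} on the generators, one concludes that $[\![f,f']\!]$ restricted to $\mathrm{graph}\,\varphi_0$ lies in $\mathrm{graph}\,\varphi$, which is involutivity. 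The main obstacle is precisely this replacement of arbitrary generators by $\varphi$-related ones: this is the only step where the existence hypothesis is genuinely used, and without it the argument of Lemma~\ref{lem:CA_morphism_over_identity} simply cannot be transported to the present setting.
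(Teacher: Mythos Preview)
Your overall strategy is sound and close to the paper's, but two points deserve comment.

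First, your justification of ``independence of the chosen $\varphi$-related lift'' is not correct as written: the Dorfman bracket is a first-order differential operator, so $[\![g_1,g_2]\!]_2\circ\varphi_0$ does \emph{not} depend on the $g_i$ only through $g_i\circ\varphi_0$. In the forward direction independence is in fact automatic for the trivial reason that your argument already proves~\eqref{eq:CA_morphism_function_bracket} for \emph{every} choice of $\varphi$-related $g_i$ (the left-hand side being fixed). The paper notes separately that an analogous argument using~\cite[Proposition~2.12]{Vyso20} lets one exchange $\varphi$-related lifts; this is the honest content of the parenthetical ``any (and hence each)''.

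Second, in the converse you only verify anchor compatibility on $\mathrm{graph}\,\varphi$, whereas condition~(iv) concerns all of $(\mathrm{graph}\,\varphi)^{\bot\!\!\!\bot}$. As in Lemma~\ref{lem:CA_morphism_over_identity} this is closed by invoking~\cite[Proposition~2.16]{Vyso20} \emph{after} involutivity is established; you should say so.

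On the involutivity step itself your generating-set/Leibniz route (porting Lemma~\ref{lem:CA_morphism_over_identity} verbatim) is valid, but the paper takes a shorter path that avoids the expansion via~\eqref{eq:Dorfman-Leibniz-rule} altogether: given $h_1,h_2\in\Gamma(E_1\times E_2;\mathrm{graph}\,\varphi)$ it simply sets $f_i(x):=p_1\bigl(h_i(x,\varphi_0(x))\bigr)\in\Gamma(E_1)$, picks $\varphi$-related $g_i\in\Gamma(E_2)$ by hypothesis, observes that $p_1^*f_i\oplus p_2^*g_i$ agrees with $h_i$ on $\mathrm{graph}\,\varphi_0$, and then~\cite[Proposition~2.12]{Vyso20} together with~\eqref{eq:CA_morphism_function_bracket} finishes. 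Your decomposition into generators is unnecessary here because a single $\varphi$-related pair already reproduces the given section along $\mathrm{graph}\,\varphi_0$; the gain of the paper's approach is that no Leibniz bookkeeping is required.
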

\begin{proof}
``$\implies$'' Let $\varphi$ be a classical Courant algebroid morphism. Let $f_1,f_2$ and consider $\varphi$-related sections $g_1,g_2\in\Gamma(E_2)$. Then, isotropy of $\mathrm{graph}\,\varphi$ implies
\begin{align*}
\langle f_1,f_2\rangle_1 & = \langle\varphi\circ f_1,\varphi\circ f_2\rangle_2 = \langle g_1,g_2\rangle_2\circ\varphi_0
\end{align*}
and hence~\eqref{eq:CA_morphism_function_metric} holds true. Recall that $\Gamma(E_1\times E_2;\mathrm{graph}\,\varphi)$ denotes the sections of $E_1\times E_2$ which take on $\mathrm{graph}\,\varphi_0$ values in $\mathrm{graph}\,\varphi$. In particular, $p_1^*f_i\oplus p_2^*g_i\in\Gamma(E_1\times E_2;\mathrm{graph}\,\varphi)$, $i \in\set{1,2}$. Since $\varphi$ is a classical Courant algebroid morphism, $\Gamma(E_1\times E_2;\mathrm{graph}\,\varphi)$ is involutive and hence
\begin{equation}\label{eq:irrsinnig_lang}
\left[\!\left[p_1^*f_1\oplus p_2^*g_1,p_1^*f_2\oplus p_2^*g_2\right]\!\right] = p_1^*[\![f_1,f_2]\!]_1\oplus p_2^*[\![g_1,g_2]\!]
\end{equation}
is contained in $\Gamma(E_1\times E_2;\mathrm{graph}\,\varphi)$. Evaluating~\eqref{eq:irrsinnig_lang} on $\mathrm{graph}\,\varphi_0$ yields~\eqref{eq:CA_morphism_function_bracket}. The identity~\eqref{eq:CA_morphism_function_anchor} is a consequence of~\cite[Proposition 2.16]{Vyso20}.

``$\impliedby$'' Let $(\varphi,\varphi_0): (E_1,M_1)\to (E_2,M_2)$ be a vector bundle morphism with~\eqref{eq:CA_morphism_function_bracket}--\eqref{eq:CA_morphism_function_anchor}. We show that $\varphi$ is a classical Courant algebroid morphism. First, we show that $\mathrm{graph}\,\varphi$ is isotropic. Let $e,e'\in E_1$ with $\pi(e_1) = \pi(e_2) =: x$. Then, there exist sections $f_1,f_2\in\Gamma(E_1)$ with $f_1(x) = e_1$ and $f_2(x) = e_2$. Let $g_1,g_2\in\Gamma(E_2)$ with $f_1\sim_\varphi g_1$ and $f_2\sim_\varphi g_2$, so that~\eqref{eq:CA_morphism_function_metric} holds. Then, we have
\begin{align*}
0 & = \langle f_1(x),f_2(x)\rangle_1 - \left\langle g_1(\varphi_0(x)),g_2(\varphi_0(x))\right\rangle_2\\
& = \langle e_1,e_2\rangle_1-\langle \varphi(e_1),\varphi(e_2)\rangle_2 = \langle (e_1,\varphi(e_1)),(e_2,\varphi(e_2))\rangle.
\end{align*}
This shows that $\mathrm{graph}\,\varphi$ is isotropic. A straightforward calculation yields from~\eqref{eq:CA_morphism_function_anchor} that $\mathrm{graph}\,\varphi$ is compatible with the anchor. We show that $\Gamma(E_1\times E_2;\mathrm{graph}\,\varphi)$ is involutive. Let $h_1,h_2\in \Gamma(E_1\times E_2;\mathrm{graph}\,\varphi)$ and define
\begin{align*}
f_i: M_1\to E_1,\qquad x\mapsto p_1(h_i(x,\varphi_0(x))),\qquad i\in\set{1,2}.
\end{align*}
Since $h_1$ and $h_2$ are sections of the product vector bundle $E_1\times E_2$, we have $f_1,f_2\in\Gamma(E_1)$. Let $g_1,g_2\in\Gamma(E_2)$ be $\varphi$-related to $f_1$ and $f_2$, respectively. Then, $p_1^*f_i\oplus p_2^*g_i\in\Gamma(E_1\times E_2;\mathrm{graph}\,\varphi)$ and we have
\begin{align*}
\left.\left(h_i-p_1^*f_i\oplus p_2^*g_i\right)\right\vert_{\mathrm{graph}\,\varphi_0}\equiv 0,\quad i\in\set{1,2}.
\end{align*}
Therefore,~\cite[Proposition 2.12]{Vyso20} implies 
\begin{align*}
[\![h_1,h_2]\!] & = \left[\!\left[h_1,p_1^*f_2\oplus p_2^*g_2\right]\!\right]\\
& = \left[\!\left[p_1^*f_1\oplus p_2^*g_1,p_1^*f_2\oplus p_2^*g_2\right]\!\right].
\end{align*} 
Then~\eqref{eq:irrsinnig_lang} yields with~\eqref{eq:CA_morphism_function_bracket} that $[\![h_1,h_2]\!]\in\Gamma(E_1\times E_2;\mathrm{graph}\,\varphi)$. This shows that $\Gamma(E_1\times E_2;\mathrm{graph}\,\varphi)$ is isotropic; and an analogous argument yields that we may replace $g_1$ and $g_2$ by any other $\varphi$-related sections. Finally, compatibility of $\left(\mathrm{graph}\,\varphi\right)^{\bot\!\!\!\bot}$ with the anchor is a direct consequence of~\cite[Proposition 2.16]{Vyso20}.
\end{proof}

\begin{Remark}
By using $\varphi$-representations (see~\cite{HiggMack90}), this characterisation may be generalised to the case that the existence of $\varphi$-related sections cannot be guaranteed.
\end{Remark}

If the morphism is injective and its base is an immersion onto a closed regular submanifold, then we can show that the existence of $\varphi$-related sections is guaranteed.

\begin{Lemma}\label{lem:phi_related_existence}
Let $\pi:E\to M$ and $\tau:F\to N$ be vector bundles and $(\varphi,\varphi_0):(E,M)\to (F,N)$ a vector bundle morphism so that $\varphi_0$ is an immersion and $\varphi_0(M)$ is a closed regular submanifold of $N$. Then, for each $f\in\Gamma(E)$ there exists $g\in\Gamma(F)$ so that $f$ and $g$ are $\varphi$-related.
\end{Lemma}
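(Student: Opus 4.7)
The idea is to build $g$ in two steps: first define it along the closed regular submanifold $S := \varphi_0(M) \subseteq N$, where the $\varphi$-relation $g\circ\varphi_0 = \varphi\circ f$ dictates its values, then extend it smoothly to all of $N$ using a tubular neighbourhood together with a bump function. First I would note that the hypotheses, combined with the (implicit) injectivity of $\varphi_0$ that is necessary for the lemma to hold for arbitrary $f$, turn $\varphi_0$ into a smooth embedding of $M$ onto $S$; consequently $\varphi_0: M\to S$ is a diffeomorphism by the universal property of regular submanifolds, and one may define the smooth section
\begin{equation*}
  g_0: S\to F,\qquad y\mapsto \varphi\big(f(\varphi_0^{-1}(y))\big)
\end{equation*}
of $F\vert_S$, which by construction satisfies $g_0\circ\varphi_0 = \varphi\circ f$.

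To extend $g_0$ to a global section of $F$, I would invoke the tubular neighbourhood theorem for the closed regular submanifold $S\subseteq N$, which supplies an open neighbourhood $V\supseteq S$ together with a smooth retraction $r: V\to S$. Then $\widetilde g := g_0\circ r$ is a smooth section of $F\vert_V$ extending $g_0$. Since $S$ is closed in $N$, one may select a smooth bump function $\chi: N\to [0,1]$ with $\chi\equiv 1$ on an open neighbourhood of $S$ and $\mathrm{supp}\,\chi\subseteq V$. Setting
\begin{equation*}
  g(y) := \begin{cases}\chi(y)\,\widetilde g(y), & y\in V,\\ 0_{F,y}, & y\notin\mathrm{supp}\,\chi,\end{cases}
\end{equation*}
then yields a well-defined smooth global section of $F$ (the two prescriptions agree on the overlap $V\setminus\mathrm{supp}\,\chi$, where both vanish), and $g\vert_S = g_0$ gives the required identity $g\circ\varphi_0 = \varphi\circ f$, i.e.\ $f\sim_\varphi g$.

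The main obstacle is the extension step: closedness of $S$ in $N$ is essential to guarantee that the bump-function glueing produces a globally smooth object, and the tubular neighbourhood theorem is what supplies the smooth retraction $r$ needed to lift $g_0$ from $S$ to a whole neighbourhood. An alternative would be to argue directly with partitions of unity subordinate to local trivialisations of $F$ and use slice charts for $S$ to extend component-wise, but the tubular-neighbourhood approach is conceptually cleaner. The remaining steps are essentially formal consequences of $\varphi_0: M\to S$ being a diffeomorphism.
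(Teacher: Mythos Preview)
Your overall strategy coincides with the paper's: define the desired section on $S=\varphi_0(M)$ as $g_0=\varphi\circ f\circ\varphi_0^{-1}$ (you are right that injectivity of $\varphi_0$ is being tacitly assumed; the paper simply writes $\varphi_0^{-1}$), and then extend it to all of $N$. The paper dispatches the extension with the phrase ``a partition of unity argument'', while you attempt a tubular-neighbourhood construction.

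That construction, however, contains a genuine gap. The map $\widetilde g := g_0\circ r$ is \emph{not} a section of $F\vert_V$: for $y\in V$ one computes $\tau(\widetilde g(y))=\tau(g_0(r(y)))=r(y)$, which equals $y$ only when $y\in S$. In other words, $g_0\circ r$ takes values in the fibre $F_{r(y)}$ rather than $F_y$; a retraction on the base does not by itself transport vectors between fibres of $F$. To salvage the argument you would additionally need a vector-bundle isomorphism $F\vert_V\cong r^*(F\vert_S)$ over $\mathrm{id}_V$ (which does exist, since $r\simeq \mathrm{id}_V$, but constructing it already amounts to a partition-of-unity or connection argument). The direct route you mention as an alternative---local trivialisations of $F$, slice charts for $S$, componentwise extension, and patching via a partition of unity---avoids this issue entirely and is exactly what the paper has in mind.
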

\begin{proof}
Let $f\in\Gamma(E)$. Define the function $f_0 := \varphi\circ f\circ\varphi_0^{-1}$. Since $(\varphi,\varphi_0)$ is a vector bundle morphism and since $f$ is a section of $E$, $f_0$ is a section of the restricted vector bundle $F\vert_{\varphi_0(M)}$. Since $\varphi_0(M)$ is, by assumption, a closed regular submanifold of $N$ a partition of unity argument implies the existence of a section $g\in\Gamma(F)$ so that $g\vert_{\varphi_0(M)} = f_0$. In particular, we have
\begin{align}\label{eq:Dorfman_assumption}
g\circ\varphi_0 = f_0\circ\varphi_0 = \varphi\circ f\circ\varphi_0^{-1}\circ\varphi_0 = \varphi\circ f.
\end{align}
This shows that $f$ and $g$ are $\varphi$-related.
\end{proof}

Under the assumptions of Lemma~\ref{lem:phi_related_existence}, we may show that each such vector bundle morphism into a Courant algebroid generates a unique pullback Courant algebroid structure on the source of the morphism.

\begin{Proposition}\label{prop:pullback_algebroid}
Let $(E,M,\pi,\rho,\langle\cdot,\cdot\rangle,[\![\cdot,\cdot]\!])$ be a Courant algebroid, $\pi':E'\to M'$ a vector bundle and $(\varphi,\varphi_0):(E',M')\to (E,M)$ an injective vector bundle morphism so that $\varphi_0(M')$ is a closed, regular submanifold of $M$ and $\rho(\varphi_0(M'))\subseteq\mathcal{T}\varphi_0(M')$. There exists a Courant algebroid structure $(\rho',\langle\cdot,\cdot\rangle',[\![\cdot,\cdot]\!]')$ on $E'$ so that $\varphi$ is a Courant algebroid morphism if, and only if, $\langle\cdot,\cdot\rangle$ is fiberwise nondegenerate on $\mathrm{im}\,\varphi$ and $\Gamma(E;\varphi(E'))$ is involutive. If it exists, this Courant algebroid structure is unique.
\end{Proposition}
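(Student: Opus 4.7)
The plan is to apply Proposition~\ref{prop:char_with_phi_rel_sec}, which is available here since Lemma~\ref{lem:phi_related_existence} guarantees a $\varphi$-related section $g\in\Gamma(E)$ for every $f\in\Gamma(E')$. Identities~\eqref{eq:CA_morphism_function_bracket}--\eqref{eq:CA_morphism_function_anchor} then prescribe the three data of any candidate pullback structure pointwise, which simultaneously yields uniqueness and reduces existence to a well-definedness question.

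For necessity, suppose $(\rho',\langle\cdot,\cdot\rangle',[\![\cdot,\cdot]\!]')$ turns $\varphi$ into a classical Courant algebroid morphism. Applying~\eqref{eq:CA_morphism_function_metric} pointwise gives $\langle\varphi(e_1'),\varphi(e_2')\rangle = \langle e_1',e_2'\rangle'$; fiberwise nondegeneracy of $\langle\cdot,\cdot\rangle'$ combined with fiber injectivity of $\varphi$ forces $\langle\cdot,\cdot\rangle$ to be nondegenerate on $\im\varphi$. For involutivity of $\Gamma(E;\varphi(E'))$, I first treat sections $g_i=\varphi\circ f_i\circ\varphi_0^{-1}$ extended to $M$ via partition of unity as in Lemma~\ref{lem:phi_related_existence}: identity~\eqref{eq:CA_morphism_function_bracket} gives $[\![g_1,g_2]\!]\circ\varphi_0 = \varphi\circ[\![f_1,f_2]\!]'$, so the bracket takes values in $\varphi(E')$ along $\varphi_0(M')$. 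For a general pair $h_1,h_2\in\Gamma(E;\varphi(E'))$ I reduce to this case via~\cite[Proposition 2.12]{Vyso20}, which uses the hypothesis that the anchor is tangent to $\varphi_0(M')$ on $\varphi(E')$.

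For sufficiency, I construct the pullback data as forced by Proposition~\ref{prop:char_with_phi_rel_sec}. Set $\langle e_1',e_2'\rangle' := \langle\varphi(e_1'),\varphi(e_2')\rangle$, which is a pseudo-Euclidean metric by hypothesis (a) and fiber injectivity of $\varphi$. Define $\rho'(e') := (\mathrm{d}\varphi_0)^{-1}(\rho(\varphi(e')))$; this is well-defined and smooth because $\varphi_0$ is an immersion with closed regular image and $\rho\circ\varphi$ takes values in $\mathcal{T}\varphi_0(M')$ by hypothesis. For the bracket, I pick any $\varphi$-related sections $g_i\in\Gamma(E)$ to $f_i\in\Gamma(E')$ via Lemma~\ref{lem:phi_related_existence}; hypothesis (b) places $[\![g_1,g_2]\!]|_{\varphi_0(M')}$ inside $\varphi(E')$, so fiberwise injectivity of $\varphi$ lets me define $[\![f_1,f_2]\!]'$ by the pointwise relation $\varphi\circ[\![f_1,f_2]\!]' = [\![g_1,g_2]\!]\circ\varphi_0$.

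The main obstacle is showing that this bracket is well-defined and smooth, and that the Courant algebroid axioms of Definition~\ref{def:Courant_algebroid} are inherited. For well-definedness, if $\tilde g_1$ is another $\varphi$-related section to $f_1$, then $g_1-\tilde g_1$ vanishes on $\varphi_0(M')$ and~\cite[Proposition 2.12]{Vyso20} yields $[\![g_1,g_2]\!]|_{\varphi_0(M')} = [\![\tilde g_1,g_2]\!]|_{\varphi_0(M')}$; the second slot is handled analogously. Smoothness follows because $[\![g_1,g_2]\!]\circ\varphi_0$ is a smooth section of $\varphi(E')|_{\varphi_0(M')}$, on which $\varphi$ restricts to a vector bundle isomorphism onto its image. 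Axioms (i)--(iii) on $E'$ then transfer from the corresponding identities in $E$ evaluated along $\varphi_0(M')$, using $\varphi$-relatedness and the already-established compatibilities of $\varphi$ with metric, bracket, and anchor. Uniqueness was recorded in the first paragraph, since each of the three data is determined pointwise by~\eqref{eq:CA_morphism_function_bracket}--\eqref{eq:CA_morphism_function_anchor}.
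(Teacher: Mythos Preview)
Your overall strategy coincides with the paper's: pull back metric, anchor and bracket along $\varphi$ as dictated by Proposition~\ref{prop:char_with_phi_rel_sec}, then verify well-definedness and the axioms. The necessity direction and the construction of $\rho'$ and $\langle\cdot,\cdot\rangle'$ are handled as in the paper.

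The gap is in the well-definedness of $[\![\cdot,\cdot]\!]'$ with respect to the \emph{first} argument. You write that if $\tilde g_1$ is another $\varphi$-related extension of $f_1$, then \cite[Proposition~2.12]{Vyso20} gives $[\![g_1,g_2]\!]\vert_{\varphi_0(M')} = [\![\tilde g_1,g_2]\!]\vert_{\varphi_0(M')}$, and that ``the second slot is handled analogously''. But the Dorfman bracket is not $\mathcal{C}^\infty(M)$-linear in its first argument: writing $g_1-\tilde g_1=\sum_i\lambda_i e_i$ locally with $\lambda_i\vert_{\varphi_0(M')}=0$, formula~\eqref{eq:Dorfman-Leibniz-rule} leaves the residual term $\sum_i\langle e_i,g_2\rangle\,D_\rho(\lambda_i)$ on $\varphi_0(M')$, and this does \emph{not} vanish from anchor-tangency alone. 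What one obtains is only that this residual lies in $\varphi(E')^{\perp}$, since $\langle D_\rho(\lambda_i),v\rangle=\rho(v)(\lambda_i)=0$ for $v\in\varphi(E')$. The paper therefore proceeds differently: it uses axiom~(iii) of Definition~\ref{def:Courant_algebroid} to rewrite $[\![g_1-\tilde g_1,g_2]\!]$ as $-[\![g_2,g_1-\tilde g_1]\!]+D_\rho(\langle g_1-\tilde g_1,g_2\rangle)$, kills the first summand by the easy second-slot argument, and then shows the $D_\rho$ term is simultaneously in $\varphi(E')$ and orthogonal to $\varphi(E')$, hence zero by the assumed fiberwise nondegeneracy of $\langle\cdot,\cdot\rangle$ on $\mathrm{im}\,\varphi$. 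In your write-up the nondegeneracy hypothesis is invoked only to make $\langle\cdot,\cdot\rangle'$ a metric; you must also use it (together with involutivity, which puts both $[\![g_1,g_2]\!]$ and $[\![\tilde g_1,g_2]\!]$ into $\varphi(E')$ along $\varphi_0(M')$) to conclude that the first-slot discrepancy, lying in $\varphi(E')\cap\varphi(E')^{\perp}$, is zero. Once this is added, your argument is complete and agrees with the paper's.
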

\begin{proof}
``$\implies$'' Let $(\rho',\langle\cdot,\cdot\rangle',[\![\cdot,\cdot]\!]$ be a Courant algebroid structure on $E'$ so that $\varphi$ is a classical Courant algebroid morphism. In particular, $\mathrm{graph}\,\varphi$ is isotropic with respect to the fiberwise inner product
\begin{align*}
\langle\!\langle\cdot,\cdot\rangle\!\rangle: (E'\times E)\times_{M'\times M}(E'\times E) & \to \mathbb{R},\\
\big(x',x),(y',y)\big) & \mapsto \langle x',y'\rangle'-\langle x,y\rangle.
\end{align*}
Equivalently, we have
\begin{align}\label{eq:isotropy_1}
\forall (x',y')\in E'\times_{M'}E': \langle x',y'\rangle' = \langle\varphi(x'),\varphi(y')\rangle.
\end{align}
Let $x_0\in E'$ so that the restriction of $\langle\varphi(x_0),\cdot\rangle$ vanishes on $\mathrm{im}\varphi\cap\pi^{-1}(\pi(\varphi(x_0)))$. Since $\varphi$ is a vector bundle morphism, we have 
\begin{align*}
\mathrm{im}\varphi\cap\pi^{-1}(\pi(\varphi(x_0))) & = \set{z\in\mathrm{im}\,\varphi\,\big\vert\,\pi(z) = \pi(\varphi(x_0))}\\
& = \set{\varphi(x')\,\big\vert\,\pi'(x') = \pi'(x_0)}.
\end{align*}
Therefore,~\eqref{eq:isotropy_1} yields
\begin{align*}
\forall x'\in\pi'^{-1}(\pi'(x_0)): \langle x_0,x'\rangle' = \langle \varphi(x_0),\varphi(x')\rangle = 0
\end{align*}
and thus $\langle x_0,\cdot\rangle' \equiv 0$. Since $\langle \cdot,\cdot\rangle$ is nondegenerate, we conclude that $x_0 = 0_{E'}(\pi'(x_0))$; and since $\varphi$ is fiberwise linear, this yields $\varphi(x_0) = 0_{E}(\pi(\varphi(x_0)))$. This shows that $\langle\cdot,\cdot\rangle$ is fiberwise nondegenerate on $\mathrm{im}\,\varphi$. Let $f,g\in\Gamma(E;\varphi(E'))$. Since $\varphi$ is (as classical Courant algebroid morphism) injective, the sections $\widetilde{f} := \varphi^{-1}\circ f\circ \varphi_0$ and $\widetilde{g} := \varphi^{-1}\circ g\circ\varphi_0$ are well-defined and $\varphi$-related to $f$ and $g$, respectively. Since $\varphi$ is, by assumption, a Courant algebroid morphism, we conclude that
\begin{align*}
[\![f,g]\!]\circ\varphi_0 = \varphi\circ[\![\widetilde{f},\widetilde{g}]\!]'\in\Gamma \varphi_0^*\varphi(E')
\end{align*}
and hence $[\![f,g]\!]\in\Gamma(E;\varphi(E'))$.

``$\impliedby$'' By assumption, $\varphi_0$ is an immersion onto its image $\varphi_0(M')$. In particular, $\varphi_0$ is a bijection onto its image and admits an inverse function $\psi:\varphi_0(M)\to M'$. From the implicit function theorem, it follows that $\psi$ is a smooth function. Define the anchor
\begin{align*}
\rho':E'\to\mathcal{T}M',\quad e\mapsto(\mathrm{d}\psi\circ\rho\circ\varphi)(e).
\end{align*}
By assumption, $\rho\circ\varphi$ maps into $\mathcal{T}\varphi_0(M')$ and hence $\rho'$ is well-defined. Since $\rho$ is a vector bundle morphism over the identity, and $\mathrm{d}\psi$ is a vector bundle morphism over $\psi$, $\rho'$ is a vector bundle morphism over the identity. Since $\langle\cdot,\cdot\rangle$ is a fiberwise bilinear form on $E$ and since $\varphi$ is a vector bundle morphism, the function
\begin{align*}
\langle\cdot,\cdot\rangle': E'\oplus_{M'} E'\to \mathbb{R},\qquad (e,e')\mapsto \langle\varphi(e),\varphi(e')\rangle.
\end{align*}
is a fiberwise bilinear form on $E'$. Let $e\in E'$ so that $\langle e,\cdot\rangle'\equiv 0$; by construction this is equivalent to $\langle\varphi(e),\cdot\rangle\vert_{\varphi(\pi'^{-1}(e))}\equiv 0$. Since the restriction of $\langle\cdot,\cdot\rangle$ to $\mathrm{im}\varphi$ is nondegenerate by assumption, we conclude that $\varphi(e) = 0$; since $\varphi$ is injective, we have $e = 0_{E'}(\pi'(e))$. This shows that $\langle\cdot,\cdot\rangle'$ is a fiberwise nondegenerate inner product on $E'$. Lastly, we construct a Dorfman bracket on $E'$. Let $f,g\in\Gamma(E')$. In view of Lemma~\ref{lem:phi_related_existence}, there are sections $\widehat{f},\widehat{g}\in\Gamma(E)$ so that $\varphi\circ f = \widehat{f}\circ\varphi_0$, and $\varphi\circ g = \widehat{g}\circ\varphi_0$. Define
\begin{align*}
[\![f,g]\!]' := \varphi^{-1}\circ[\![\widehat{f},\widehat{g}]\!]\circ\varphi_0.
\end{align*}
We show that $[\![f,g]\!]'$ is well-defined. Let $\widetilde{f},\widetilde{g}\in\Gamma(E)$ with $\varphi\circ f = \widetilde{f}\circ\varphi_0$ and $\varphi\circ g = \widetilde{g}\circ\varphi_0$. Let $x\in M'$. Then, there exists an open neighbourhood $U$ of $x$ and $e_1,\ldots,e_k\in\Gamma(E)$ so that $(e_i\vert_U)_{i = 1}^n$ is a frame of the restriction $E\vert_U := \pi^{-1}(U)$. Let $f_1,\ldots,f_n\in\mathcal{C}^\infty(M)$ and $g_1,\ldots,g_n\in\mathcal{C}^\infty(M)$ so that the restrićtions $(f_i\vert_U)_{i = 1}^n$ and $(g_i\vert_U)_{i = 1}^n$ are representations of $(\widetilde{f}-\widehat{f})\vert_{U}$ and $(\widetilde{g}-\widehat{g})\vert_{U}$ with respect to the frame $(e_i\vert_U)_{i = 1}^n$, respectively. By assumption, $\widehat{g}\circ\varphi_0 = \widetilde{g}\circ\varphi_0$ and $\widehat{f} = \widetilde{f}$, and thus $(f_i\circ\varphi)\vert_{\psi(U)} = (g_i\circ\varphi)\vert_{\psi(U)}\equiv 0$ for all $i\in\underline{n}$. In view of~\eqref{eq:Dorfman-Leibniz-rule}, we have, for all $z\in\psi(U)$,
\begin{align*}
[\![\widetilde{f},\widetilde{g}-\widehat{g}]\!](\varphi_0(z)) & = \left[\!\left[h,\sum_{i = 1}^ng_i e_i\right]\!\right](\varphi_0(z))\\
& = \sum_{i = 1}^n \underbrace{g_i(\varphi_0(z))}_{ = 0}[\![\widetilde{f},e_i]\!](\varphi_0(z))\\
&\quad + \underbrace{\rho(\widetilde{f}(\varphi_0(z)))(g_i)(\varphi_0(z))}_{ = 0} e_i(\varphi_0(z)) = 0,
\end{align*}
where $\rho(\widetilde{f}(\varphi_0(z))(g_i)(\varphi_0(z)) = 0$ since $\rho(\widetilde{f}(\varphi_0(z))\in\mathcal{T}\varphi_0(M)$. In particular, we have $[\![\widetilde{f},\widetilde{g}]\!]\circ\varphi_0 = [\![\widetilde{f},\widehat{g}]\!]\circ\varphi_0$; analogously, it may be proven that $[\![\widehat{f},\widetilde{g}]\!] = [\![\widehat{f},\widehat{g}]\!]$ and $[\![\widetilde{g},\widetilde{f}-\widehat{f}]\!] = 0$. By property (iii) in the definition of the Courant algebroid, we have
\begin{align*}
& [\![\widetilde{f}-\widehat{f},\widetilde{g}]\!](\varphi_0(z))\\
&\quad = -[\![\widetilde{g},\widetilde{f}-\widehat{f}]\!](\varphi_0(z)) + D_\rho(\langle \widetilde{f}-\widehat{f},\widetilde{g}\rangle)(\varphi_0(z))\\
&\quad = D_\rho(\langle \widetilde{f}-\widehat{f},\widetilde{g}\rangle)(\varphi_0(z)),
\end{align*}
where $D_\rho(s)\in\Gamma(E)$ denotes the unique section with $\langle D_{\rho}(s),s'\rangle = \rho(s)(s')$ for all $s'\in\Gamma(E)$. In particular, the assumption~\eqref{eq:Dorfman_assumption} yields that $D_\rho(\langle \widetilde{f}-\widehat{f},\widetilde{g}\rangle)(\varphi_0(z))\in\varphi(E')$ and we have for all $e'\in \pi'^{-1}(\set{z})$,
\begin{align*}
0 = \rho(\varphi(e'))(\langle\widetilde{f}-\widehat{f},\widetilde{g}\rangle) = \langle D_\rho(\langle \widetilde{f}-\widehat{f},\widetilde{g}\rangle)(\varphi_0(z)),\varphi(e')\rangle;
\end{align*}
hence $D_\rho(\langle \widetilde{f}-\widehat{f},\widetilde{g}\rangle)(\varphi_0(z))$ is orthogonal on $\varphi(\pi'^{-1}(\set{z})$. Since we assume that $\langle\cdot,\cdot\rangle$ is nondegenerate on $\varphi(E')$, we conclude that $D_\rho(\langle \widetilde{f}-\widehat{f},\widetilde{g}\rangle)(\varphi_0(z)) = 0_{E}(\varphi_0(z))$. Thus, we conclude that $[\![\widehat{f},\widetilde{g}]\!]\circ\varphi_0 = [\![\widetilde{f},\widetilde{g}]\!]\circ\varphi_0$; analogously, we see that $[\![\widehat{f},\widehat{g}]\!]\circ\varphi_0 = [\![\widetilde{f},\widehat{g}]\!]\circ\varphi_0$. Therefore, we have
\begin{align*}
[\![\widehat{f},\widehat{g}]\!]\circ\varphi_0 = [\![\widetilde{f},\widehat{g}]\!]\circ\varphi_0 = [\![\widetilde{f},\widetilde{g}]\!]\circ\varphi_0
\end{align*}
and $[\![f,g]\!]'$ is indeed well-defined. It remains to show that $(\rho',\langle\cdot,\cdot\rangle',[\![\cdot,\cdot]\!]')$ is a Courant algebroid structure on $E'$. Let $f,g,h\in\Gamma(E')$ with $\varphi$-related sections $\widehat{f},\widehat{g},\widehat{h}\in\Gamma(E)$, i.e.
\begin{align*}
\varphi\circ f = \widehat{f}\circ\varphi_0,\quad \varphi\circ g = \widehat{g}\circ\varphi_0,\quad\varphi\circ h = \widehat{h}\circ\varphi_0.
\end{align*}
By construction of $[\![\cdot,\cdot]\!]$, we find that $[\![g,h]\!]', [\![f,g]\!]', [\![f,h]\!]'$ and $[\![\widehat{g},\widehat{h}]\!], [\![\widehat{f},\widehat{g}]\!], [\![\widehat{f},\widehat{h}]\!]$ are $\varphi$-related, respectively. Therefore, we get
\begin{align*}
\varphi\circ[\![f,[\![g,h]\!]']\!]' & = [\![\widehat{f},[\![\widehat{g},\widehat{h}]\!]]\!]\circ\varphi_0\\
& = [\![[\![\widehat{f},\widehat{g}]\!],\widehat{h}]\!]\circ\varphi_0 + [\![\widehat{g},[\![\widehat{f},\widehat{h}]\!]]\!]\circ\varphi_0\\
& = \varphi\circ[\![[\![f,g]\!]',h]\!]'+\varphi\circ[\![g,[\![f,h]\!]']\!]'\\
& = \varphi\circ ([\![[\![f,g]\!]',h]\!]'+[\![g,[\![f,h]\!]']\!]').
\end{align*}
Since $\varphi$ is injective, we conclude that
\begin{align*}
[\![f,[\![g,h]\!]']\!]' = [\![[\![f,g]\!]',h]\!]'+[\![g,[\![f,h]\!]']\!]'
\end{align*}
and thus $(\Gamma(E'),[\![\cdot,\cdot]\!]')$ is a left-Loday algebra. For each $x\in M'$, we have
\begin{align*}
\rho'(f(x))\langle g,h\rangle' & = (\mathrm{d}\psi\circ\rho)((\varphi\circ f)(x))\langle \varphi\circ g,\varphi\circ h\rangle\\
& = \rho((\varphi\circ f)(x))\langle \varphi\circ g\circ\psi,\varphi\circ h\circ \psi\rangle\\
& = \rho(\widehat{f}(\varphi_0(x)))\langle\widehat{g},\widehat{h}\rangle\\
& = \langle [\![\widehat{f},\widehat{g}]\!](\varphi_0(x)),\widehat{h}(\varphi_0(x))\rangle\\
&\quad + \langle \widehat{g}(\varphi_0(x)),[\![\widehat{f},\widehat{g}]\!](\varphi_0(x))\rangle\\
& = \langle [\![f,g]\!](x),h(x)\rangle + \langle g(x),[\![f,h]\!](x)\rangle.
\end{align*}
This shows that $(\rho',\langle\cdot,\cdot\rangle',[\![\cdot,\cdot]\!])$ fulfills the property (ii) in Definition~\ref{def:Courant_algebroid}. It remains to verify that $(\rho',\langle\cdot,\cdot\rangle',[\![\cdot,\cdot]\!])$ fulfills (iii). Let $\alpha\in\mathcal{C}^\infty(M')$ and $\widehat{\alpha}\in\mathcal{C}^\infty(M)$ so that $\alpha = \widehat{\alpha}\circ\varphi_0$. Equivalently, we have $\widehat{\alpha}\vert_{\varphi_0(M')} = \alpha\circ\psi$. Recall that $D_{\rho'}(\alpha)\in\Gamma(E')$ is the unique solution of
\begin{align*}
\langle D_{\rho'}(\alpha)(x),k(x)\rangle' = \rho'(k(x))(\alpha)
\end{align*}
for all $x\in M'$ and $k\in\Gamma(E')$. Let $x\in M'$ and $k\in\Gamma(E')$ with $\varphi$-related section $\widehat{k}\in\Gamma(E)$. Let further $\ell\in\Gamma(E')$ be $\varphi$-related to $D_{\rho'}(\alpha)$. Then, we have
\begin{align*}
\langle\ell(\varphi_0(x)),\widehat{k}(\varphi_0(x))\rangle & = \langle\varphi(D_{\rho'}(\alpha)(x)),\varphi(k(x))\rangle\\
& = \langle D_{\rho'}(\alpha)(x),k(x)\rangle'\\
& = \rho'(k(x))(\alpha)\\
& = \rho(\varphi(k(x)))(\alpha\circ\psi)\\
& = \rho(\widehat{k}(\varphi_0(x)))(\widehat{\alpha})\\
& = \langle D_\rho(\widehat{\alpha})(\varphi_0(x)),\widehat{k}(\varphi(x))\rangle.
\end{align*}
From the nondegeneracy of $\langle\cdot,\cdot\rangle$ on $\mathrm{im}\,\varphi$, we conclude that $\ell\circ\varphi_0 = D_\rho(\widehat{\alpha})\circ\varphi_0$. In particular, we have
\begin{align*}
\varphi\circ D_{\rho'}(\langle f,g\rangle') & = D_\rho(\langle\widehat{f},\widehat{g}\rangle)\circ\varphi_0\\
& = [\![\widehat{f},\widehat{g}]\!]\circ\varphi_0+[\![\widehat{g},\widehat{f}]\!]\circ\varphi_0\\
& = [\![f,g]\!]'+[\![g,f]\!]'.
\end{align*}
This shows that $(\rho',\langle\cdot,\cdot\rangle',[\![\cdot,\cdot]\!]')$ is a Courant algebroid structure on $E'$.

The uniqueness of the Courant algebroid structure is a direct consequence of Proposition~\ref{prop:char_with_phi_rel_sec}.
\end{proof}

\section{port-Hamiltonian systems}

As a motivation, we consider the simple port-Hamiltonian system
\begin{equation}\label{eq:simple_pH}
\begin{aligned}
\tfrac{\mathrm{d}}{\mathrm{d}t}x & = J\nabla H(x) + Bu\\
y & = B^\top\nabla H(x)
\end{aligned}
\end{equation}
on the Euclidean space $\mathbb{R}^n$ with $J = -J^\top\in\mathbb{R}^{n\times n}$, $H\in\mathcal{C}^\infty(\mathbb{R}^n)$ and $B\in\mathbb{R}^{n\times m}$. This system is an open system which contains the Poisson-Hamilton system
\begin{align}\label{eq:PoisH}
\tfrac{\mathrm{d}}{\mathrm{d}t}x & = J\nabla H(x)
\end{align}
given by the ``no-interaction'' input $u\equiv 0$. Consider the Poisson-Hamilton system
\begin{align}\label{eq:interaction_HS}
\tfrac{\mathrm{d}}{\mathrm{d}t}\begin{pmatrix}
x\\z
\end{pmatrix} = \begin{bmatrix}
J & B\\
-B^\top & 0
\end{bmatrix}\nabla \mathcal{H}(t,x,z).
\end{align}
When we consider the family of interaction Hamiltonians $H_u(t,x,z) := H(x)+u(t)^\top z$, then we see that the behaviour, i.e. all solutions $(x,z)$, of~\eqref{eq:interaction_HS} with any of the interaction Hamiltonians $H_u$ (which is evidently contained in the behaviour of~\eqref{eq:interaction_HS} with any Hamiltonian) can be projected by $(x,z)\mapsto (x,-\tfrac{\mathrm{d}}{\mathrm{d}t}z)$ to the behaviour of~\eqref{eq:simple_pH}; and the latter contains the behaviour of the Poisson-Hamilton system~\eqref{eq:PoisH}. Therefore, the port-Hamiltonian system lies ``in between'' the closed systems~\eqref{eq:PoisH} and~\eqref{eq:interaction_HS}. This observation on the behaviours of the systems should carry over to the underlying geometric structures, in particular when we consider port-Hamiltonian systems on (nontrivial) manifolds whose port-space is given by a vector bundle. We therefore propose the following definition of an intrinsic geometric structure of port-Hamiltonian systems.

\begin{Definition}\label{def:intrinsic}
Let $\pi:E\to M$ be a vector bundle and consider a vector bundle isomorphism $\Phi$ between $\mathcal{T}E\oplus_E\mathcal{T}^*E$ and $p_1: E\oplus_M\mathcal{T}M\oplus_M\mathcal{T}^*M\oplus _ME\oplus_M E^*\to E$ over the identity on $E$. Equip $\mathcal{T}E\oplus_E\mathcal{T}^*E$ and $\mathcal{T}M\oplus_M\mathcal{T}^*M$ with their standard Courant algebroid structures. An intrinsic geometric structure of a port-Hamiltonian system on $M$ with port-structure $E$ given by $\Phi$ is each Courant-algebroid structure on $\mathcal{T}M\oplus_M\mathcal{T}^*M\oplus _ME\oplus_M E^*$ so that each arrow of
\begin{center}
\begin{tikzcd}
\mathcal{T}M\oplus_M\mathcal{T}^*M\arrow[d,"\mathrm{id}\oplus 0_{E\oplus E^*}"]\\
\mathcal{T}M\oplus_M\mathcal{T}^*M\oplus _ME\oplus_M E^*\arrow[d,"0_E\oplus\mathrm{id}"]\\
E\oplus_M\mathcal{T}M\oplus_M\mathcal{T}^*M\oplus _ME\oplus_M E^*\arrow[d,"\Phi"]\\
\mathcal{T}E\oplus_E\mathcal{T}^*E
\end{tikzcd}
\end{center}
is a classical Courant algebroid morphism, where $E\oplus_M\mathcal{T}M\oplus_M\mathcal{T}^*M\oplus _ME\oplus_M E^*$ has any Courant algebroid structure.
\end{Definition}

It may be shown that there exists at most one intrinsic structure given by a splitting $\Phi$.

\begin{Proposition}
With the notations of Definition~\ref{def:intrinsic}, the intrinsic geometric structure given by $\Phi$ is unique if it exists.
\end{Proposition}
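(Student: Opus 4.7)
The plan is to unwind the diagram from the bottom upwards, showing that the two lower arrows alone pin down the Courant algebroid structure on $\mathcal{T}M\oplus_M\mathcal{T}^*M\oplus_M E\oplus_M E^*$; the top arrow will then contribute only to the question of existence, not to uniqueness. Suppose that we are given a Courant algebroid structure on $\mathcal{T}M\oplus_M\mathcal{T}^*M\oplus_M E\oplus_M E^*$ and one on $E\oplus_M\mathcal{T}M\oplus_M\mathcal{T}^*M\oplus_M E\oplus_M E^*$ that together turn all three arrows of the diagram into classical Courant algebroid morphisms.

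In the first step I would rigidify the intermediate structure by means of the bottom arrow $\Phi$. Since $\Phi$ is a vector bundle isomorphism over the identity on $E$, so is $\Phi^{-1}$, and both $\Phi$ and $\Phi^{-1}$ are classical Courant algebroid morphisms by assumption. Lemma~\ref{lem:CA_morphism_over_identity}, applied in both directions, then forces the anchor, the metric, and the Dorfman bracket of the intermediate structure to coincide with the transports under $\Phi$ of the corresponding data of the standard Courant algebroid on $\mathcal{T}E\oplus_E\mathcal{T}^*E$; in particular, the intermediate structure is uniquely determined by $\Phi$.

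In the second step I would apply Proposition~\ref{prop:pullback_algebroid} to the middle arrow $\iota := 0_E\oplus\mathrm{id}$, viewed as a classical Courant algebroid morphism into the now-rigid intermediate bundle. The morphism $\iota$ is fiberwise injective and its base $0_E:M\to E$ is an immersion onto the closed regular submanifold $0_E(M)\subseteq E$; the anchor-tangency requirement follows from the anchor compatibility~\eqref{eq:CA_morphism_function_anchor} of Proposition~\ref{prop:char_with_phi_rel_sec} (whose hypothesis on $\iota$-related sections is met by Lemma~\ref{lem:phi_related_existence}), together with the fact that $\mathrm{d}0_E$ maps $\mathcal{T}M$ into $\mathcal{T}0_E(M)$. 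The uniqueness clause of Proposition~\ref{prop:pullback_algebroid} then identifies the structure on $\mathcal{T}M\oplus_M\mathcal{T}^*M\oplus_M E\oplus_M E^*$ as the pullback along $\iota$, which proves the claim.

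The point that I expect to demand the most care is the verification that the remaining existence hypotheses of Proposition~\ref{prop:pullback_algebroid} --- fiberwise nondegeneracy of the intermediate metric on $\mathrm{im}\,\iota$ and involutivity of the sections of the intermediate bundle supported on $\mathrm{im}\,\iota$ --- follow automatically from $\iota$ being a classical Courant algebroid morphism. The former is a consequence of the isotropy of $\mathrm{graph}\,\iota$ combined with the injectivity of $\iota$ and the fiberwise nondegeneracy of the intermediate metric fixed in the first step, while the latter is built into item~(iii) of Definition~\ref{def:CA_morphism}. With these verifications in hand, the two uniqueness statements combine without further work, and the top arrow $\mathrm{id}\oplus 0_{E\oplus E^*}$ is seen to play no role in the uniqueness argument.
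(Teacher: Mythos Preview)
Your argument is correct and follows essentially the same strategy as the paper: pin down each Courant algebroid structure in the diagram, from the bottom up, by the uniqueness clause of Proposition~\ref{prop:pullback_algebroid}. The only differences are cosmetic: for the bottom arrow $\Phi$ you invoke Lemma~\ref{lem:CA_morphism_over_identity} directly (since $\Phi$ is over the identity) rather than Proposition~\ref{prop:pullback_algebroid}, and you correctly observe that the top arrow plays no role in uniqueness---both refinements that the paper's terse proof leaves implicit.
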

\begin{proof}
The bases of the morphism in Definition~\eqref{def:intrinsic} are either zero sections or diffeomorphisms. Therefore, the prerequisites of Proposition~\ref{prop:pullback_algebroid} are met. Hence, there exists unique Courant-algebroid structures on $\mathcal{T}M\oplus_M\mathcal{T}^*M$, $\mathcal{T}M\oplus_M\mathcal{T}^*M\oplus _ME\oplus_M E^*$ and $E\oplus_M\mathcal{T}M\oplus_M\mathcal{T}^*M\oplus _ME\oplus_M E^*$ so that the arrows of the diagram in Definition~\ref{def:intrinsic} are Courant algebroid morphisms.
\end{proof}

\begin{Remark}
In Definition~\ref{def:intrinsic}, we have used an isomorphism $\mathcal{T}E\oplus_E\mathcal{T}^*E\overset{\Phi}{\cong} E\oplus_M\mathcal{T}M\oplus_M\mathcal{T}^*M\oplus _ME\oplus_M E^*$ of vector bundles. Up to an(other) isomorphism, such an isomorphism may be obtained from the (noncanonical!) splitting of $\mathcal{T}E$ into horizontal and vertical bundle, and the associated splitting of $\mathcal{T}^*E$ into the annihilator bundles, see e.g.~\cite{KolaMichSlov93}, so that there always exists a splitting $\Phi$ as proposed. In future works, we shall study which isomorphism define an intrinsic geometric structure of port-Hamiltonian systems.
\end{Remark}

\section{Conclusion}

We have studied Courant-algebroid structures on the bundle $\mathcal{T}M\oplus\mathcal{T}^*M\oplus E\oplus E^*$, which are related (by morphisms) to the standard Courant-algebroid structures on the Pontryagin bundles $\mathcal{T}M\oplus\mathcal{T}^*M$ and $\mathcal{T}E\oplus\mathcal{T}^*E$. Due to the definition of a morphism, the images (and pre-images) of isotropic structures are isotropic, relating the (almost Dirac structures of) port-Hamiltonian and its encompassing Hamiltonian systems. Thus, the terminology of Definition~\ref{def:intrinsic} appears not unjustified.

\bibliographystyle{alpha}
\bibliography{Literatur} 
\end{document}